\let\old@ps@headings\ps@headings
\let\old@ps@IEEEtitlepagestyle\ps@IEEEtitlepagestyle
\def\psccfooter#1{%
    \def\ps@headings{%
        \old@ps@headings%
        \def\@oddfoot{\strut\hfill#1\hfill\strut}%
        \def\@evenfoot{\strut\hfill#1\hfill\strut}%
    }%
    \def\ps@IEEEtitlepagestyle{%
        \old@ps@IEEEtitlepagestyle%
        \def\@oddfoot{\strut\hfill#1\hfill\strut}%
        \def\@evenfoot{\strut\hfill#1\hfill\strut}%
    }%
    \ps@headings%
}
\newtheorem{theorem}{Theorem}
\newcommand{\ubar}[1]{\underaccent{\bar}{#1}}
\newlist{abbrv}{itemize}{1}
\setlist[abbrv,1]{label=,labelwidth=0.5in,align=parleft,itemsep=0.1\baselineskip,leftmargin=!}
\begin{document}
%
% paper title
% Titles are generally capitalized except for words such as a, an, and, as,
% at, but, by, for, in, nor, of, on, or, the, to and up, which are usually
% not capitalized unless they are the first or last word of the title.
% Linebreaks \\ can be used within to get better formatting as desired.
% Do not put math or special symbols in the title.

\twocolumn

\title{High-Resolution PTDF-Based Planning of Storage and Transmission Under High Renewables}

%% To specify the authors when (number of affiliations <= 2)
\author{
\IEEEauthorblockN{Kevin Wu \\ Pascal Van Hentenryck}
\IEEEauthorblockA{H. Milton Stewart School of Industrial and Systems Engineering \\ Georgia Institute of Technology, Atlanta, GA, United States \\
\{kwu381, pvh\}@gatech.edu}
\and
\IEEEauthorblockN{Rabab Haider}
\IEEEauthorblockA{Civil and Environmental Engineering\\
University of Michigan, Ann Arbor, MI, USA\\
rababh@umich.edu}
}

%% To specify the authors when (number of affiliations > 2)
% \author{\IEEEauthorblockN{Author n.1\IEEEauthorrefmark{1},
% Author n.2\IEEEauthorrefmark{2},
% Author n.3\IEEEauthorrefmark{3}, 
% Author n.4\IEEEauthorrefmark{3} and
% Author n.5\IEEEauthorrefmark{4}}
% \IEEEauthorblockA{\IEEEauthorrefmark{1} Department Name of Organization A\\
% Name of the organization A,
% Address A\\ Emails if wanted}
% \IEEEauthorblockA{\IEEEauthorrefmark{2} Department Name of Organization B\\
% Name of the organization B,
% Address B\\ Emails if wanted}
% \IEEEauthorblockA{\IEEEauthorrefmark{3} Department Name of Organization C\\
% Name of the organization C,
% Address C\\ Emails if wanted}
% \IEEEauthorblockA{\IEEEauthorrefmark{4}Department Name of Organization D\\
% Name of the organization D,
% Address D\\ Emails if wanted}
% }

% make the title area
\maketitle

% As a general rule, do not put math, special symbols or citations
% in the abstract
\begin{abstract}
    Transmission Expansion Planning (TEP) optimizes power grid upgrades and investments to ensure reliable, efficient, and cost-effective electricity delivery while addressing grid constraints. To support growing demand and renewable energy integration, energy storage is emerging as a pivotal asset that provides temporal flexibility and alleviates congestion. This paper develops a multiperiod, two-stage PTDF formulation that co-optimizes transmission upgrades and storage siting/sizing. To ensure scalability, a trust-region, multicut Benders scheme warm-started from per-representative-day optima is proposed. Applied to a 2,000-bus synthetic Texas system under high-renewable projections, the method attains final optimality gaps below 2\% and yields a plan with storage at 167 nodes (32\% of peak renewable capacity). These results demonstrate that the proposed PTDF-based methodology efficiently handles large distributed storage fleets, demonstrating scalability at high spatial resolution.
\end{abstract}
\allowdisplaybreaks

\begin{IEEEkeywords}
transmission expansion planning, storage planning, co-optimization, PTDF, renewables.
\end{IEEEkeywords}
\thanksto{\noindent This work was partially funded by NSF award 2112533.}
\section*{Nomenclature}
\label{sec:nomenclature}

\newcommand{\db}{\delta B}
\newcommand{\dbhi}{\bar{\delta B}}
\newcommand{\dblo}{\ubar{\delta B}}
\newcommand{\dva}{{\Delta \theta}}
\newcommand{\dvahi}{\bar{\Delta \theta}}
\newcommand{\dvalo}{\ubar{\Delta \theta}}

\paragraph{Sets}

\renewcommand{\S}{\mathcal{S}}
\newcommand{\N}{\mathcal{N}}
\newcommand{\T}{\mathcal{T}}
\newcommand{\E}{\mathcal{E}}

\begin{abbrv}
    \item[$\N$] Set of buses; $\N = \{1, ..., N\}$; indexed by $i$
    \item[$\E$] Set of branches; $\E = \{1, ..., E\}$; indexed by $ij$
    \item[$\S$] Set of scenarios; $\S = \{1, ..., S\}$; indexed by $s$
    \item[$\T$] Set of hours; $\T = \{1, ..., T\}$; indexed by $t$
\end{abbrv}
\vspace{\baselineskip} 
% Note: Variables and parameters are indexed by $i$ (bus), $s$ (scenario), and $t$ (hour) where applicable. % If indices are not included, they refer to node $i$, in scenario $s$, and at hour $t$.

\paragraph{Parameters}

\newcommand{\pd}{\mathbf{p}^{\text{d}}}
\newcommand{\pgmin}{\mathbf{\ubar{p}}^{\text{g}}}
\newcommand{\pgmax}{\mathbf{\bar{p}}^{\text{g}}}
\newcommand{\pfmax}{\mathbf{\bar{p}}^{\text{f}}}
\newcommand{\dvamin}{\ubar{\theta}}
\newcommand{\dvamax}{\bar{\theta}}

\newcommand{\df}{\delta \pf}
\newcommand{\dX}{\delta X}
\newcommand{\dXmin}{\ubar{\delta X}}
\newcommand{\dXmax}{\bar{\delta X}}
\newcommand{\dB}{\delta B}
\newcommand{\dBmin}{\ubar{\delta B}}
\newcommand{\dBmax}{\bar{\delta B}}

\begin{abbrv}
    \item[$\pd_{i,s,t}$] Load at node $i$ in scenario $s$ at hour $t$
    \item[$\pgmin_{i,s,t}$] min output of generator $i$ in scenario $s$ at hour $t$
    \item[$\pgmax_{i,s,t}$] max output of generator $i$ in scenario $s$ at hour $t$
    \item[$c^{\text{cap}}_{ij}$] Line capacity upgrade cost of edge $ij$
    \item[$c^{\text{stor}}_{i}$] Storage energy rating cost at node $i$
    \item[$c^{\text{G}}_{i}(\cdot)$] Production cost function of generator $i$
    \item[$\Omega_{\S}(\cdot)$] Probability mass function for set of scenarios $\S$
    \item[$\lambda$] Load shedding penalty cost
    \item[$\Delta^c_{ij}$] Capacity upgrade increment of branch $ij$
    \item[$\pfmax_{ij}$] Thermal limit of branch $ij$
\end{abbrv}

\paragraph{Variables}

\newcommand{\pg}{\mathbf{p}^{\text{g}}}
\newcommand{\pf}{\mathbf{p}^{\text{f}}}
\newcommand{\va}{\theta}

\begin{abbrv}
    \item[$\gamma_{ij}$] Capacity upgrade level of branch $ij$
    \item[$\sigma_i$] Storage energy rating at node $i$
    \item[$\text{soc}_{i,s,t}$] Storage SoC at node $i$ in scenario $s$ at hour $t$
    \item[$\text{ch}_{i,s,t}$] Storage charge at node $i$ in scenario $s$ at hour $t$
    \item[$\text{dis}_{i,s,t}$] Storage discharge at node $i$ in scenario $s$ at hour $t$
    \item[$\pg_{i,s,t}$] Output of generator $i$ in scenario $s$ at hour $t$
    \item[$\pf_{ij,s,t}$] Power flow on branch $ij$ in scenario $s$ at hour $t$
    \item[$\xi_{i,s,t}$] Load shed at node $i$ in scenario $s$ at hour $t$
\end{abbrv}
\section{Introduction}
\label{sec:introduction}
The U.S. electric grid is rapidly evolving with rising electricity demand and renewable energy penetration,  requiring substantial transmission infrastructure investments to maintain reliable and affordable service \cite{crossedwires2024}. This presents a critical challenge for system operators who must solve Transmission Expansion Planning (TEP) problems to identify cost-effective strategies for grid upgrades to ensure capability of delivering power from generators to load centers. TEP is modeled as a mixed-integer program (MIP) that is difficult to solve for large-scale systems due to discrete investment decisions. They are made more computationally challenging by including investment decisions for generation \cite{hemmati2013comprehensive}, FACTS devices \cite{Wu_2024_TNEPFACTS}, and energy storage \cite{Sheibani_review}. Given the critical role of storage in providing flexibility under high renewable penetration, this work focuses on TEP with storage.

The TEP+Storage literature typically adopts at least one of three simplifications: (i) reduced spatial granularity; (ii) relaxation of DC power flow to a transport model; or (iii) settings whose economics yield sparse storage deployments. These choices reduce problem dimensionality and computational burden. This work addresses these limits: a PTDF-based, high-resolution model under high-renewable scenarios that endogenously deploys a large distributed storage fleet, creating a more demanding computational setting. A trust-region-stabilized Benders scheme addresses this regime, attaining final optimality gaps below 2\% on a 2,000-bus system with storage at 167 nodes (32\% of peak renewable capacity).

\subsection{Related Work}
High-fidelity TEP+Storage formulations are often limited to small systems due to computational complexity. For example, \cite{Ansari} uses a linearized AC model on a 24-bus system, and \cite{Gan} solves a security-constrained DC model on a 53-bus system; these preserve grid physics but do not readily scale to thousands of buses. 

The use of Benders decomposition (BD) for storage sizing and siting in energy planning has also been extensively studied. For example, \cite{Gan} employs BD within their security-constrained framework; \cite{Xiong} applies BD to storage planning under wind uncertainty on a 24-bus system; \cite{Moradi} develops a Pareto-optimal cut variant for joint generation, transmission, and storage planning on a 24-bus system; and \cite{MacRae} co-plans transmission and storage using BD on a 46-bus Brazilian system. These works demonstrate the effectiveness of decomposition-based approaches for storage planning, but are typically applied on small networks. To reach larger systems, many studies adopt transport relaxations. For example, \cite{zuluaga2024parallel} scales to the 8{,}870-bus CATS using progressive hedging, while \cite{JacobsonPecci} considers the Eastern U.S. aggregated into 19 zones using BD. Storage deployment is also frequently sparse: \cite{Dvorkin} studies merchant storage with a tri-level DC model and, under their economic assumptions and 240-bus case, report profitable deployment at $\leq$7 sites (3\% of peak renewables); while \cite{piansky2025optimizing} solves a DC model with binary storage siting/sizing and line undergrounding on CATS using BD under wildfire risk and report deployments at $\leq$9 sites totaling 2.6 GWh ($\approx$11\% of peak renewables). Collectively, these works demonstrate large-scale feasibility; however, they typically involve reduced flow-model fidelity and/or sparse storage arising from the assumed economics or risk setting, which can alter congestion patterns and investment signals while easing computational burden.

\subsection{Contributions}
To address these limitations, this paper proposes a combined PTDF-based DC formulation with a stabilized (trust-region) Benders approach. The methodology is demonstrated in a high-renewables setting on a 2,000-bus network, thereby addressing simultaneous gaps in spatial resolution, network-fidelity, and storage scalability. The contributions are:
\begin{itemize}
    \item A multiperiod, two-stage PTDF model that co-optimizes line upgrades and storage siting/sizing with hourly dispatch and storage dynamics, preserving DC physics while remaining tractable at scale
    \item A trust-region, multicut Benders scheme, warm-started from per-representative day optima, that mitigates dual degeneracy and oscillatory master-iterates, strengthens cuts, and improves convergence predictability
    \item A multiyear evaluation on a 2,000-bus synthetic Texas system under high-renewable scenarios that endogenously yields a large distributed storage fleet (167 devices, 32\% of peak renewables) while attaining final optimality gaps $\leq$2\%, demonstrating quality of solution.
\end{itemize}

The paper proceeds as follows. §\ref{sec:formulation} formulates the two-stage TEP+Storage model; §\ref{sec:methodology} details the Benders methodology; §\ref{sec:casestudy} describes the Texas case study; §\ref{sec:results} reports optimality gaps, runtime, and investment decisions; and §\ref{sec:conclusion} concludes.
\section{Formulation}
\label{sec:formulation}

PTDF-based DCOPF is well established for TEP \cite{rahmani2016comprehensive}, yet large-scale storage integration within PTDF frameworks remains comparatively unexplored. This section presents the PTDF-based TEP+Storage formulation. The formulation assumes that at most one generator is connected to each bus, without loss of generality.

\subsection{The Two-Stage TEP+Storage PTDF Formulation}
\label{sec:formulation:TNEP}
    The TEP+Storage model is a two-stage program. The \textbf{investment} stage selects transmission upgrades and storage siting/sizing to minimize capital expenditures ($CapEx$). The \textbf{recourse} stage fixes those investments and minimizes operational costs ($OpEx$) over the \textbf{investment period} $\mathcal{H}$, the planning horizon during which $CapEx$ decisions apply, thereby capturing the coupling between long-term planning and short-term operations.

    \subsubsection{Investment decisions and variables}
    \label{sec:formulation:TNEP:variables}

    This study prioritizes capacity upgrades of existing transmission lines over new line additions, as new construction often faces economic, regulatory, and social barriers \cite{GridLab2024Reconductoring}. Upgrades are represented by the discrete decision variable $\gamma_{ij}$ for each branch $ij \in \mathcal{E}$, with allowable levels of enhancement $\Delta^c_{ij}$ (MW per line).

    Storage is modeled as short-duration lithium-ion batteries due to their technological maturity and widespread deployment \cite{MIT2022FutureEnergyStorage}. Sizing decisions are captured by $\sigma_i$, denoting the installed energy capacity (MWh) at node $i$. Each unit has a 4-hour duration, giving a power rating of $\tfrac{1}{4}\sigma_i$. Hourly storage operations are modeled by $\text{ch}_{i,t}$, $\text{dis}_{i,t}$, and $\text{soc}_{i,t}$, representing charge, discharge, and state of charge (SoC), respectively.

    Other key variables include nodal generation dispatch $\pg$ and line power flows $\pf$ (auxiliary via PTDF).
    
    \subsubsection{Objective}
    \label{sec:formulation:TNEP:objective}
    The TEP+Storage model minimizes the sum of $CapEx$ and $OpEx$. The $CapEx$ is a linear function of first-stage decisions $(\gamma, \sigma)$ covering line and storage investments:
    \begin{align}
    \label{eq:TNEP:capex}
        CapEx &= f(\gamma, \sigma) := \textstyle\sum_{ij \in \E} c^{\text{cap}}_{ij} \gamma_{ij} + \textstyle\sum_{i \in \N}{c^\text{stor}_i \sigma_i}
    \end{align}
    Given investments $(\gamma, \sigma)$, the operating cost $OpEx_s(\gamma, \sigma)$ of scenario-$s$ includes daily generation expenses, storage charge and discharge costs, and penalties for load shedding using a per-unit penalty $\lambda$ for load shed $\xi_{i,s,t}$. It is computed over the investment period via scaling factor $\alpha$ (e.g., $\alpha=365$ for one year). The aggregate operating cost $OpEx(\gamma, \sigma)$ is a probability-weighted sum across scenarios $s \in \S$.
    \begin{align}
        & OpEx_s(\gamma, \sigma) := GenEx + StorEx + Penalties \label{eq:TNEP:opex_s} \\
        & = \alpha \textstyle\sum_{i \in \N, t \in \T} 
                    [ c^\text{G}_{i}(\pg_{i,s,t}) + c^\text{S}_{i}(\text{ch}_{i,s,t} + \text{dis}_{i,s,t})
                    + \lambda \;\xi_{i,s,t} ] \nonumber, \\
        & OpEx(\gamma, \sigma) := \textstyle\sum_{s \in \S} \Omega_{\S}(s) \;OpEx_s(\gamma, \sigma) \label{eq:TNEP:opex_all}
    \end{align}

    \subsubsection{Constraints}
        \label{sec:formulation:TNEP:constraints}

        The DC power flow model uses the Power Transfer Distribution Factor (PTDF) matrix $\Phi$, which maps net nodal injections to line flows. For scenario $s$ and hour $t$, let $\text{inj}^{s,t} := [\pg_{i,s,t} + \text{dis}_{i,s,t} - \pd_{i,s,t} - \text{ch}_{i,s,t} + \xi_{i,s,t}]^{N}_{i=1}$ denote the vector of net injections.
        %\begin{align*}
        %     \text{inj}^{s,t} &:= [\pg_{i,s,t} + \text{dis}_{i,s,t} - \pd_{i,s,t} - \text{ch}_{i,s,t} + \xi_{i,s,t}]^{N}_{i=1}
        % \end{align*}
    The operational constraints are:
    \begin{align}
        \label{eq:TNEP:ptdf_flow}
        [\pf_{ij,s,t}]^{E}_{ij=1} &= \Phi \times \text{inj}^{s,t} \\
        \label{eq:TNEP:global_power_balance}
        \mathbf{1}^\top \text{inj}^{s,t} &= 0 \\
        \label{eq:TNEP:thermal_limit}
        -\pfmax_{ij} - \gamma_{ij} \Delta^{c}_{ij} & \leq \pf_{ij,s,t} \leq \pfmax_{ij} + \gamma_{ij} \Delta^{c}_{ij} \\
        \label{eq:TNEP:generation:min_max_limits}
        \pgmin_{i,s,t} & \leq \pg_{i,s,t} \leq \pgmax_{i,s,t} \\
        \label{eq:TNEP:state_of_charge}
        \text{soc}_{i,s,t} &= \text{soc}_{i,s,t-1} + \text{ch}_{i,s,t}\eta - \text{dis}_{i,s,t}/\eta \\
        \label{eq:TNEP:initial_charge}
        \text{soc}_{i,s,1} &= \text{ch}_{i,s,1}\eta - \text{dis}_{i,s,1}/\eta \\
        \label{eq:TNEP:final_charge}
        \text{soc}_{i,s,T} &= 0 \\
        \label{eq:TNEP:soc_energy_rating}
        0 & \leq \text{soc}_{i,s,t} \leq \sigma_{i} \\
        \label{eq:TNEP:charge_limits}
        0 & \leq \text{ch}_{i,s,t} \leq \sigma_i/4 \\ 
        \label{eq:TNEP:discharge_limits}
        0 & \leq \text{dis}_{i,s,t} \leq \sigma_i/4
    \end{align}
    where \eqref{eq:TNEP:ptdf_flow} computes the vector of line flows; \eqref{eq:TNEP:global_power_balance} imposes power balance; 
    \eqref{eq:TNEP:thermal_limit} enforces thermal flow limits; 
    \eqref{eq:TNEP:generation:min_max_limits} bounds generator outputs, with renewable generators modeled with $\pgmin=0$ and scenario-dependent $\pgmax$ (curtailment allowed), and conventional units dispatchable within their limits; 
    \eqref{eq:TNEP:state_of_charge}–\eqref{eq:TNEP:discharge_limits} govern storage, including SoC with efficiency losses, a zero-SoC boundary at the start and end of each representative day, and charge/discharge limits. Model \ref{model:TNEP_ptdf} outlines the full two-stage TEP+Storage formulation.

    \begin{model}[!t]
        \caption{Two-Stage TEP+Storage Formulation}
        \label{model:TNEP_ptdf}
        \begin{align*}
            \text{min} \quad 
                & \eqref{eq:TNEP:capex} + \eqref{eq:TNEP:opex_all}
                \\
            \text{s.t.} \quad 
            & \eqref{eq:TNEP:ptdf_flow}-\eqref{eq:TNEP:global_power_balance} , \quad \forall s \in \S, t \in \T \\
            & \eqref{eq:TNEP:thermal_limit},
                \quad \forall ij \in \E, s \in \S, t \in \T \\ 
            & \eqref{eq:TNEP:generation:min_max_limits},
                \quad \forall i \in \N, s \in \S, t \in \T \\
            & \eqref{eq:TNEP:state_of_charge}
                \quad \forall i \in \N, s \in \S, t \in \{2,...,T\} \\
            & \eqref{eq:TNEP:initial_charge}-\eqref{eq:TNEP:final_charge},
                \quad \forall i \in \N, s \in \S \\
            & \eqref{eq:TNEP:soc_energy_rating}-\eqref{eq:TNEP:discharge_limits},
            \quad \forall i \in \N, s \in \S, t \in \T
        \end{align*}
    \end{model}

    \emph{Remark.}
    To maintain recourse feasibility, load shed slack variables $\xi_{i,s,t}$ are included and penalized in $OpEx$ by a large coefficient $\lambda$ (see §\ref{sec:formulation:TNEP:objective}). Any plan with $\xi>0$ is rejected; $\lambda$ is increased (doubled) and the model is re-solved until zero load shed is achieved. This approach preserves feasibility and is guaranteed to terminate when a no-shed solution exists, consistent with industry practice where TEP must serve future load by system design.

\subsection{Scalable PTDF Reformulation}
    \label{sec:formulation:TNEP_PTDF}
    The PTDF model is scaled to high resolution systems using a sparse representation of the PTDF matrix and lazy constraint generation. A PTDF cutoff is a common sparsification approach, where entries of $\Phi$ with magnitudes below a threshold $k_p$ (typically 0.01–0.05) are set to zero, thereby sparsifying the matrix and improving computational efficiency by removing negligible effects \cite{rahmani2016comprehensive}. For very large systems, however, even the sparse PTDF can remain expensive. To address this, the PTDF flow \eqref{eq:TNEP:ptdf_flow} and thermal-limit constraints \eqref{eq:TNEP:thermal_limit} are enforced lazily: the model is first solved without these constraints, then any violated line-flow inequalities are iteratively added to the model until no violations remain. Specifically, if branch $ij$ in scenario $s$ at time $t$ violates its upper bound, the following inequality constraint is added to the model:  
    \begin{align*}
        \pf_{ij,s,t} = \textstyle\sum_{k \in \N} \text{PTDF}_{ij,k} \text{inj}^{s,t}_k \leq \pfmax_{ij} + \gamma_{ij} \Delta^{c}_{ij},
    \end{align*}
    with the inequality reversed for lower bound violations. Lazy constraints preserve model equivalence while omitting nonbinding limits, yielding smaller and more tractable formulations. The set of lazy constraints is further reduced by adding only the $k_v$ most severe violations per iteration, where hyperparameter $k_v$ can be tuned: smaller values produce a sparser model at the cost of more iterations to convergence. Both techniques (cutoff and lazy constraints) are used throughout the remainder of the paper.

\section{Methodology}
\label{sec:methodology}

The two-stage TEP+Storage model grows linearly with the number of second-stage scenarios; combined with the network’s high spatial resolution, even a modest number of scenarios yields very large instances. To address this, a Benders decomposition (BD) is employed. The baseline BD formulation (§\ref{subsec:baseline_benders}) decomposes the problem into a master problem (MP) and subproblems (SPs) linked by investment and operational decisions. Two methodological enhancements are introduced to improve convergence and stability: a warm-start module (§\ref{subsec:relaxed_two_stage}) that initializes the MP using relaxed investment solutions, and a trust-region Benders module (§\ref{subsec:trust_region_benders}) that stabilizes the iterative process using a proximal constraint and restricted candidate set. §\ref{subsec:lower_bound} presents a valid global lower bound used to assess optimality gaps.

\subsection{Multicut Benders for TEP+Storage}
   \label{subsec:baseline_benders}
    The BD algorithm alternates between the MP, which selects first-stage investments to minimize $CapEx$, and scenario SPs, which--given $(\gamma,\sigma)$--optimize continuous operational variables and minimize $OpEx$. Each SP is a linear program, and SPs are solved in parallel. From each SP, feasibility and optimality cuts are generated and added to the MP. Under the multicut scheme, the MP retains all accumulated cuts per scenario. Let $\mathcal{C}_s^{\,k}$ denote the set of cuts for scenario $s$ generated through iteration $k-1$ (with $\mathcal{C}_s^{\,1}=\emptyset$); at iteration $k$ the MP imposes $\mathcal{C}_s^{\,k}$ for all $s\in\mathcal{S}$. Figure \ref{fig:benders} summarizes the algorithm.

    \begin{figure}[!t]
        \centering
        \includegraphics[width=0.75\linewidth]{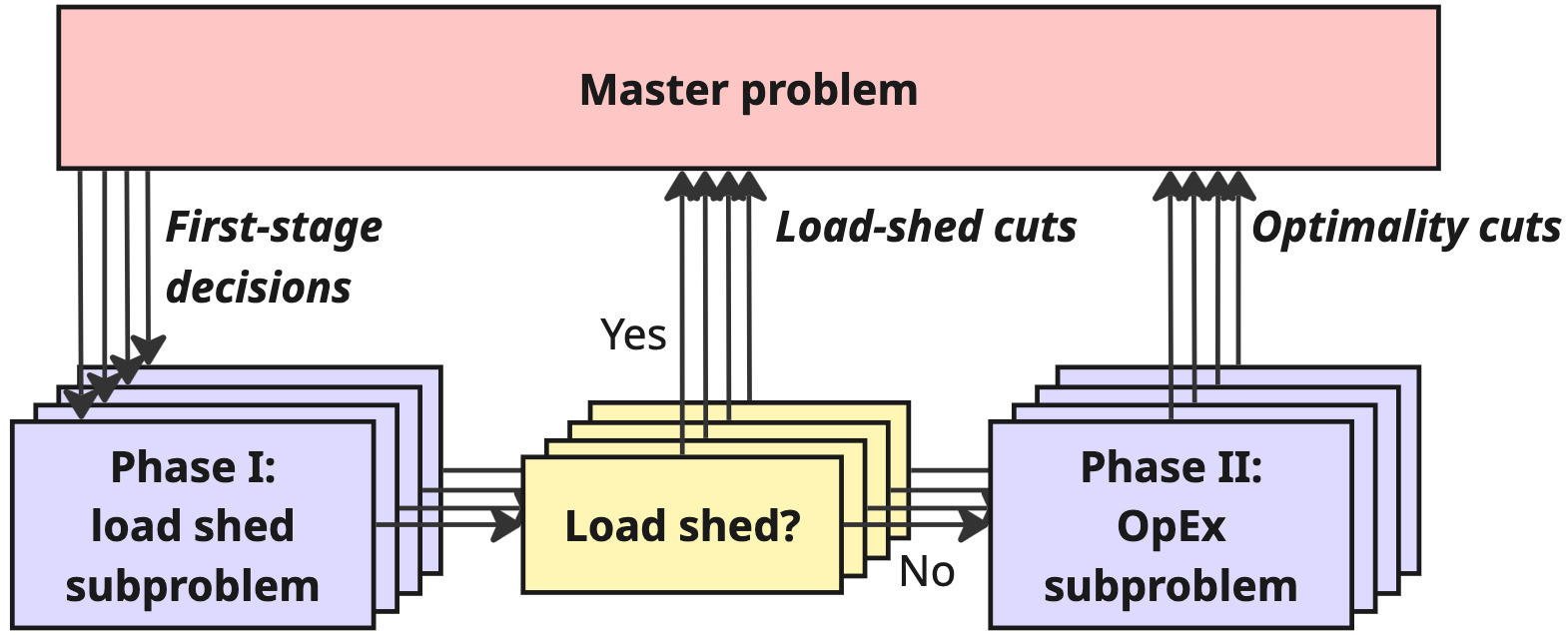}
        \caption{Benders Schematic Flow Diagram.}
        \label{fig:benders}
    \end{figure}

    \subsubsection{Benders MP}
    The MP co-optimizes line and storage investments to minimize $CapEx$. To capture operational and feasibility constraints, the MP includes proxy variables $\theta_s$ an underestimator of $OpEx_s$ for each scenario $s$, and  $\rho$ an underestimator of scenario-wide load-shed. The MP at iteration $k$ is presented in Model \ref{model:TNEP_ptdf_master}, where \eqref{eq:TNEP:no_load_shed} prevents load shedding in the final solution. After solving the MP, the new investment point $(\gamma^{k}, \sigma^{k})$ is broadcast to the SPs, which are solved in parallel to generate the next set of cuts in $\mathcal{C}_s^{\,k+1}$.
    
    \begin{model}[!t]
            \caption{Benders Master Problem (MP)}
            \label{model:TNEP_ptdf_master}
            \begin{align}
                z^{\text{MP}}_k = \text{min} \quad 
                    & \eqref{eq:TNEP:capex} + \textstyle\sum_{s \in \S}\Omega_{\S}(s)\theta_s + \rho \nonumber
                    \\
                \text{s.t.} \quad 
                \label{eq:TNEP:no_load_shed}
                & \rho \leq 0 \\
                & \textstyle\bigcup_{s \in \S}\mathcal{C}_s^k \nonumber
            \end{align}
    \end{model}

   \subsubsection{Benders SP}
   \label{sec:formulation:benders_sub}
   The scenario-$s$ SP optimizes operations under the MP decisions $(\gamma^k,\sigma^k)$, fixed in the SP with dual dual multipliers $\pi^{s,k}_\gamma$ and $\pi^{s,k}_\sigma$ in \eqref{eq:TNEP:fixed_line}-\eqref{eq:TNEP:fixed_stor}. Each SP runs two phases: Phase I enforces load feasibility and Phase II optimizes dispatch. In Phase I, the SP minimizes load shed, $\sum_{i \in \N, t \in \T} \xi_{i,s,t}$. If the optimum exhibits load shed, the SP terminates and an optimality cut is generated for shed underestimator $\rho$, thereafter referred to as a load-shed cut. If no shed occurs, the SP proceeds to Phase II to minimize $OpEx_s$ (Model \ref{model:TNEP_ptdf_subproblem}) and generate an optimality cut for $\theta_s$. Both load-shed and optimality cuts take an affine form, where $\psi \in \{\mathrm{I, II}\}$ indicates the Phase, $\zeta_s = \rho$ in Phase I and $\zeta_s = \theta_s$ in Phase II:
    \begin{align*}
        \zeta_s &\geq \beta^{k}_s + (\pi^{s,k}_\gamma)^\top\gamma + (\pi^{s,k}_\sigma)^\top\sigma \\ 
        \beta^{k}_s &:= v_s^{k,\psi} - (\pi^{s,k}_\gamma)^\top \gamma^k - (\pi^{s,k}_\sigma)^\top \sigma^k \\
        v_s^{k,\mathrm{I}} &= \textstyle\sum_{i,t}\xi_{i,s,t},\qquad
        v_s^{k,\mathrm{II}} = OpEx_s(\gamma^k,\sigma^k)
    \end{align*}
    This two-phase design prioritizes feasibility (no shed) while maintaining complete recourse via $\xi$, avoiding classical infeasibility-ray cuts which can be numerically fragile \cite{bonami2020implementing}.
    
    \begin{model}[!t]
        \caption{Benders Subproblem (SP)}
        \label{model:TNEP_ptdf_subproblem}
        \begin{align}
            \text{min} \quad 
                & OpEx_s(\gamma^k, \sigma^k) \nonumber \\
            \text{s.t.} \quad 
            & \eqref{eq:TNEP:ptdf_flow}-\eqref{eq:TNEP:global_power_balance} , \quad \forall t \in \T \nonumber \\
            & \eqref{eq:TNEP:thermal_limit},
                \quad \forall ij \in \E, t \in \T \nonumber \\ 
            & \eqref{eq:TNEP:generation:min_max_limits},
                \quad \forall i \in \N, t \in \T \nonumber \\
            & \eqref{eq:TNEP:state_of_charge},
                \quad \forall i \in \N, t \in \{2,...,T\} \nonumber \\
            & \eqref{eq:TNEP:initial_charge}-\eqref{eq:TNEP:final_charge},
                \quad \forall i \in \N \nonumber \\
            & \eqref{eq:TNEP:soc_energy_rating} - \eqref{eq:TNEP:discharge_limits},
            \quad \forall i \in \N, t \in \T \nonumber \\
            & \gamma_{ij} = \gamma^k_{ij} \quad (\pi^{s,k}_\gamma), \quad \forall ij \in \E  \label{eq:TNEP:fixed_line} \\
            & \sigma_i = \sigma^k_i \quad (\pi^{s,k}_\sigma), \quad \forall i \in \N  \label{eq:TNEP:fixed_stor}
        \end{align}
    \end{model}

    \subsubsection{Convergence of Benders}
    At iteration $k$, the MP optimum $z_k^{\mathrm{MP}}$ is a valid lower bound $\mathrm{LB}_k$ for Model~\ref{model:TNEP_ptdf}, since the variables $\{\theta_s\}$ underestimate the true scenario costs $\{OpEx_s\}$, giving $\mathrm{LB}_k := z_k^{\mathrm{MP}}$. If all SPs are feasible (Phase I yields $\sum_{i \in \N, t \in \T} \xi_{i,s,t}=0$ for every $s \in \S$), a feasible two-stage objective at $(\gamma^k, \sigma^k)$ is $\hat z_k = f(\gamma^k, \sigma^k) + OpEx(\gamma^k, \sigma^k)$, and is used to maintain a non-increasing upper bound $\mathrm{UB}_k := \min\{\mathrm{UB}_{k-1}, \hat z_k\}$. The algorithm terminates when the relative gap falls below a threshold $\epsilon$: $(\mathrm{UB}_k - \mathrm{LB}_k)/\mathrm{UB}_k \leq \epsilon.$
    
\emph{Remark.} Lazy constraints integrate naturally with BD: per scenario flow/limit constraints identified in one iteration are retained for subsequent iterations, accelerating SP solves.
\subsection{Warm-Start from Relaxed Two-Stage Model (WS Module)}
\label{subsec:relaxed_two_stage}
In the baseline BD formulation, early iterations are feasibility-cut dominated: with no initial cuts, the MP selects a no-expansion plan, causing heavy load shedding in the SPs until a feasible investment set is learned. The WS accelerates this by solving, in parallel for each $s \in \S$, the continuous relaxation of Model \ref{model:TNEP_ptdf} to obtain $(\gamma^{s},\sigma^{s})$. A conservative, non-anticipative design is then formed by component-wise aggregation of the per-scenario optimal investments, and rounded up to the admissible investment granularity:
\begin{align*}
    \bar\gamma := \textstyle\big\lceil \max_{s\in\S}\gamma^{s} \big\rceil, \quad
    \bar\sigma := \textstyle\big\lceil \max_{s\in\S}\sigma^{s} \big\rceil.
\end{align*}
The Benders MP is initialized at $(\bar\gamma,\bar\sigma)$, which reduces early infeasibility cuts and accelerates convergence.
 
\subsection{Trust-Region-Stabilized Benders (TR-Benders Module)}
\label{subsec:trust_region_benders}
Once feasibility is achieved, the BD formulation exhibits strong dual degeneracy--many line-storage configurations with nearly identical operating costs--yielding flat recourse value functions, oscillatory MP iterates, and weak cuts that slow convergence. The TR-Benders addresses this issue by augmenting the BD framework with a proximal trust-region (TR) constraint on first-stage decisions and a restricted candidate set. This stabilization mechanism limits drastic investment changes between iterations, improving numerical stability and ensuring more consistent convergence behavior.

The MP is stabilized around an anchor point $(\breve\gamma,\breve\sigma)$ defined as the incumbent investment that attains the best known upper bound; it is initialized at $(\bar\gamma,\bar\sigma)$ and updated only upon a strict improvement in the upper bound. Inspired by stabilized BD (e.g., \cite{goke2024stabilized}), the method adapts to discrete first-stage variables with an $\ell_1$ trust-region (linear, sparsity-promoting):
\begin{align}
    \textstyle\sum_{ij\in\mathcal{E}} \big|\gamma_{ij}-\breve\gamma_{ij}\big| & \leq r_k^\gamma \\
    \textstyle\sum_{i\in\mathcal{N}} \big|\sigma_{i}-\breve\sigma_{i}\big| & \leq r^\sigma
\end{align}
Absolute values are linearized in the standard way for MIPs. A fixed storage radius $r^\sigma=2$ (permitting limited re-siting) is used, while the transmission radius $r_k^\gamma$ is dynamic: (i) resetting to $1$ whenever the anchor point updates, and (ii) increasing by 1 whenever the MP is infeasible due to the strict level-set constraint in \eqref{eq:strictlevelset}, thereby restoring feasibility of the restricted MP. To further reduce degeneracy, investment decisions are restricted to a candidate set derived from the per-scenario relaxations:
\begin{align*}
    \mathcal{E}_c &:= \{\, ij\in\mathcal{E}\;|\;\exists s\in\S:\ \gamma^{s}_{ij}>0 \} \\
    \mathcal{N}_c &:= \{\, i\in\mathcal{N}\;|\;\exists s\in\S:\ \sigma^{s}_{i}>0 \}
\end{align*}
Finally, a strict level-set constraint prunes regions already known to be suboptimal:
\begin{align}
    \label{eq:strictlevelset}
    f(\gamma,\sigma)+\textstyle\sum_{s\in\S} \Omega_{\S}(s)\,\theta_s < \mathrm{UB}_k
\end{align}

In practice, strict improvement is enforced using a small numerical tolerance. Together, the TR, candidate sets, and level-set constraint damp oscillations and weaken the effects of dual degeneracy. Figure \ref{fig:methodology} summarizes the full workflow of the proposed methodology.

\begin{figure}[!t]
    \centering
    \includegraphics[width=0.7\linewidth]{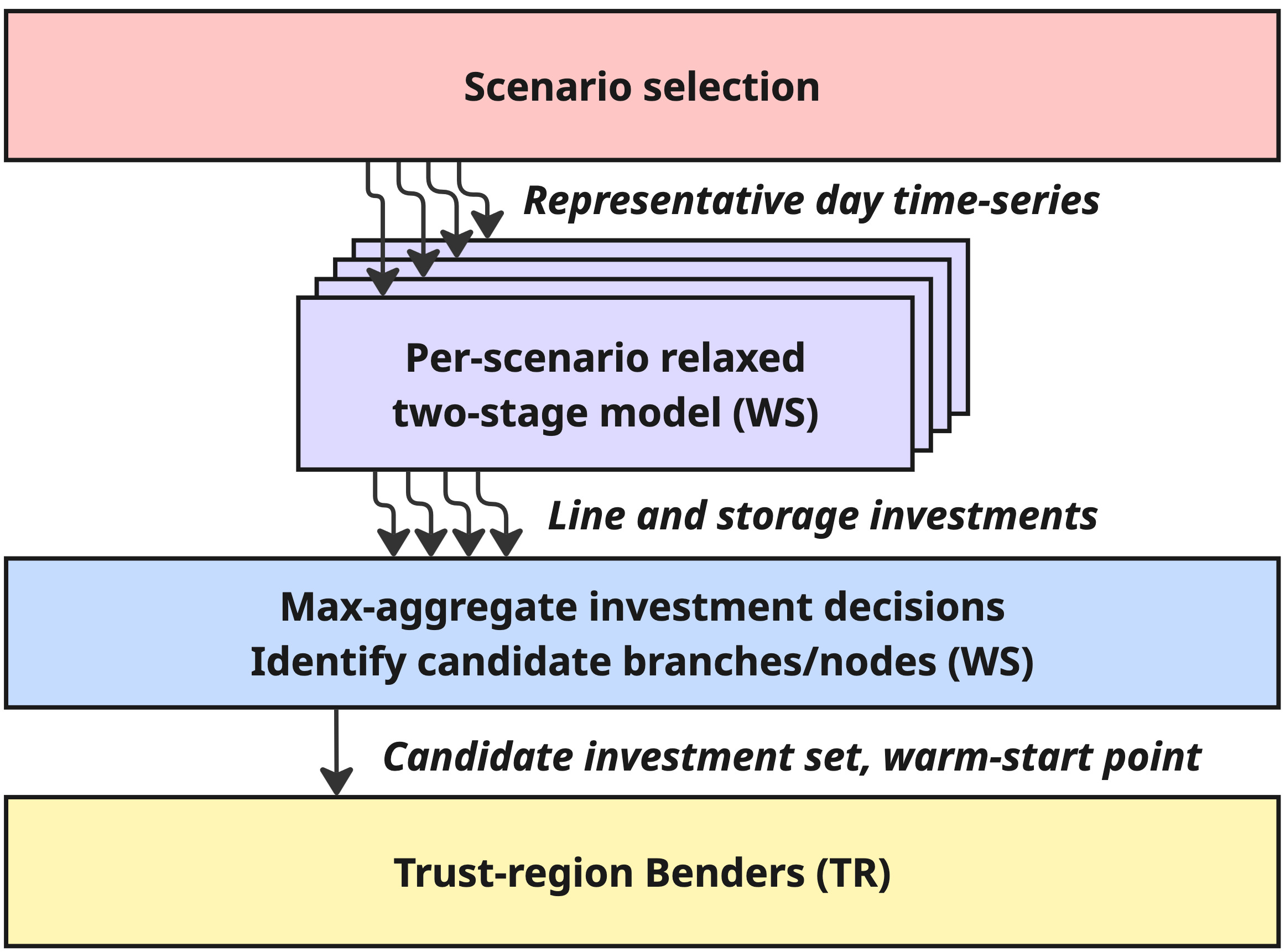}
    \caption{Workflow of the proposed methodology.}
    \label{fig:methodology}
\end{figure}

\subsection{Global Lower Bound and Optimality Assessment}
\label{subsec:lower_bound}
A limitation of the TR stabilization is that the master’s value no longer certifies a global lower bound: with the TR active, $\mathrm{LB}_k$ is only a local lower bound over the current region around the anchor point. However, a valid global lower bound can be constructed as below. 

\subsubsection{First-Stage Lower Bound}
With load shed disallowed by \eqref{eq:TNEP:no_load_shed}, the minimum feasible investment decision is constrained by the first-stage investments of the most challenging representative day (typically characterized by high load and/or low renewable generation). This is formalized in Theorem \ref{thm:first_stage_bound}. 

\begin{theorem}
\label{thm:first_stage_bound}
For each $s\in\S$, define the day-$s$ feasible set and the all-scenarios feasible set:
\begin{align*}
    \Sigma^{s}_{\mathrm{feas}} &:= \{(\gamma,\sigma): \text{recourse is feasible for scenario } s\} \\
    \Sigma_{\mathrm{feas}} &:= \textstyle\bigcap_{s\in\S}\Sigma^{s}_{\mathrm{feas}}
\end{align*}

Given the feasible sets, define the corresponding minimum first stage costs for feasible day-s and all-scenarios:
\begin{align*}
    c^{\mathrm{feas}}_s &:= \min_{(\gamma,\sigma)\in \Sigma^{s}_{\mathrm{feas}}} f(\gamma,\sigma), \quad
    c^{\mathrm{feas}} := \min_{(\gamma,\sigma)\in \Sigma_{\mathrm{feas}}} f(\gamma,\sigma),
\end{align*}
and first-stage optimal investments for scenario $s$:
\begin{align*}
    (\gamma^s_{\mathrm{feas}},\sigma^s_{\mathrm{feas}}) &\in \textstyle\arg\min_{(\gamma,\sigma)\in \Sigma^{s}_{\mathrm{feas}}} f(\gamma,\sigma)
\end{align*}
Let $\hat s \in \arg\max_{s\in\S} c^{\mathrm{feas}}_s$. Then
\begin{align*}
    \textstyle\max_{s\in\S} c^{\mathrm{feas}}_s \leq f(\gamma,\sigma) \quad \forall\; (\gamma,\sigma) \in \Sigma_{\mathrm{feas}}
\end{align*}
In particular, $f(\gamma^{\hat s}_{\mathrm{feas}},\sigma^{\hat s}_{\mathrm{feas}})=\max_{s} c^{\mathrm{feas}}_s$ is a valid lower bound on any year-feasible first-stage plan's cost.
\end{theorem}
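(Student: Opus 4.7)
The plan is to derive the inequality from basic monotonicity of the minimum over nested feasible sets, which is essentially all the structure the theorem requires. First I would observe that the definition $\Sigma_{\mathrm{feas}} := \bigcap_{s \in \S} \Sigma^{s}_{\mathrm{feas}}$ immediately yields the containment $\Sigma_{\mathrm{feas}} \subseteq \Sigma^{s}_{\mathrm{feas}}$ for every $s \in \S$. This is the only structural fact needed from the two-stage model; nothing about the specific form of the recourse constraints \eqref{eq:TNEP:ptdf_flow}--\eqref{eq:TNEP:discharge_limits} or the PTDF sparsification is invoked.

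Second, I would apply monotonicity of the minimum: minimizing $f$ over a subset cannot produce a smaller value than minimizing over a superset. Concretely, fix $s \in \S$ and any $(\gamma,\sigma) \in \Sigma_{\mathrm{feas}}$; by the containment above $(\gamma,\sigma) \in \Sigma^{s}_{\mathrm{feas}}$, hence $f(\gamma,\sigma) \geq c^{\mathrm{feas}}_s$. Because this holds for every $s$, taking the maximum over $s$ on the right preserves the inequality and gives $f(\gamma,\sigma) \geq \max_{s \in \S} c^{\mathrm{feas}}_s$, which is the main claim. The ``in particular'' clause then follows by unpacking definitions: $f(\gamma^{\hat s}_{\mathrm{feas}},\sigma^{\hat s}_{\mathrm{feas}}) = c^{\mathrm{feas}}_{\hat s} = \max_{s \in \S} c^{\mathrm{feas}}_s$, where the first equality uses that $(\gamma^{\hat s}_{\mathrm{feas}},\sigma^{\hat s}_{\mathrm{feas}})$ is a minimizer for scenario $\hat s$ and the second uses the choice of $\hat s$.

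There is no genuine technical obstacle; the argument is one line of set inclusion followed by a standard monotonicity step. The conceptual point worth flagging in the writeup is that the bound depends only on \emph{recourse feasibility}, not on operating cost, so it ignores the $OpEx$ contribution and will generally be loose relative to the true optimum. Its value lies precisely in being \emph{globally} valid regardless of the trust-region stabilization in §\ref{subsec:trust_region_benders}, which otherwise only certifies a local lower bound. I would close by noting that each $c^{\mathrm{feas}}_s$ is obtained by a single-scenario MIP (cheaper than the full two-stage model), so the bound is practically computable and can be evaluated in parallel across $s \in \S$ alongside the warm-start relaxations of §\ref{subsec:relaxed_two_stage}.
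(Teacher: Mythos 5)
Your proof is correct and follows essentially the same argument as the paper: the set inclusion $\Sigma_{\mathrm{feas}} \subseteq \Sigma^{s}_{\mathrm{feas}}$ plus monotonicity of the minimum over nested feasible sets, with the only cosmetic difference being that you state the bound pointwise for each $(\gamma,\sigma)\in\Sigma_{\mathrm{feas}}$ while the paper routes it through $c^{\mathrm{feas}}$. The additional remarks on global validity and parallel computability are accurate but not part of the proof itself.
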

\begin{proof}
Since $\Sigma_{\mathrm{feas}}\subseteq \Sigma^{s}_{\mathrm{feas}}$ for every $s$, 
\begin{align*}
c^{\mathrm{feas}}=\min_{(\gamma,\sigma)\in \Sigma_{\mathrm{feas}}} f(\gamma,\sigma)\ \geq \min_{(\gamma,\sigma)\in \Sigma^{s}_{\mathrm{feas}}} f(\gamma,\sigma)=c^{\mathrm{feas}}_s
\quad\forall s,
\end{align*}
yielding $\max_s c^{\mathrm{feas}}_s \leq c^{\mathrm{feas}}$. Choosing $\hat s\in\arg\max_s c_s$ gives the stated bound with $(\gamma^{\hat s}_{\mathrm{feas}},\sigma^{\hat s}_{\mathrm{feas}})$.
\end{proof}

\subsubsection{Second-Stage Lower Bound}
The per-scenario operational cost is monotone nonincreasing in investment capacity, as per Theorem \ref{thm:opex_monotonicity}. Let $(\gamma^{\max},\sigma^{\max})$ denote the componentwise upper bounds on line and storage capacities (not restricted to the candidate set) of Model \ref{model:TNEP_ptdf}. Then a valid lower bound on the aggregated second-stage cost is $\sum_{s\in\S}\Omega_{\S}(s)\, q_s(\gamma^{\max},\sigma^{\max})$.

\begin{theorem}
    \label{thm:opex_monotonicity}
    Let $q_s(\gamma, \sigma)$ be the optimal value of Model \ref{model:TNEP_ptdf_subproblem}. If $\gamma' \succeq \gamma$ and $\sigma' \succeq \sigma$ (componentwise), then $q_s(\gamma',\sigma') \leq q_s(\gamma,\sigma).$
    In particular, $q_s(\cdot)$ is monotone nonincreasing in $\gamma$ and $\sigma$.
\end{theorem}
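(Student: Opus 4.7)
The plan is a standard monotonicity-by-feasible-set-inclusion argument, exploiting the fact that $(\gamma,\sigma)$ enter the subproblem only through the right-hand sides of inequality constraints, not through the objective.

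First, I would isolate where $\gamma$ and $\sigma$ appear in Model~\ref{model:TNEP_ptdf_subproblem}. The objective $OpEx_s$ is a function only of $\pg$, $\text{ch}$, $\text{dis}$, and $\xi$, so it does not depend on $(\gamma,\sigma)$ directly. The parameter $\gamma$ appears only in the thermal limit \eqref{eq:TNEP:thermal_limit} as an additive term enlarging both sides of the capacity band, and $\sigma$ appears only in the SoC and charge/discharge bounds \eqref{eq:TNEP:soc_energy_rating}--\eqref{eq:TNEP:discharge_limits} as an upper bound that scales linearly in $\sigma_i$. All remaining constraints (PTDF flow, nodal balance, generator limits, SoC dynamics, boundary conditions) are independent of $(\gamma,\sigma)$.

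Second, I would define, for fixed scenario $s$, the feasible set $\mathcal{F}_s(\gamma,\sigma)$ of recourse variables and show that $\gamma'\succeq\gamma$, $\sigma'\succeq\sigma$ imply $\mathcal{F}_s(\gamma,\sigma)\subseteq \mathcal{F}_s(\gamma',\sigma')$. This reduces to a componentwise check on each affected constraint: enlarging $\gamma_{ij}$ can only loosen the two-sided bound on $\pf_{ij,s,t}$; enlarging $\sigma_i$ can only increase the upper bounds on $\text{soc}_{i,s,t}$, $\text{ch}_{i,s,t}$, and $\text{dis}_{i,s,t}$ (the lower bounds of $0$ are unchanged). Hence any feasible tuple for $(\gamma,\sigma)$ remains feasible for $(\gamma',\sigma')$.

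Finally, since the objective does not depend on $(\gamma,\sigma)$, the minimum over a larger feasible set is no larger, so $q_s(\gamma',\sigma') \leq q_s(\gamma,\sigma)$. There is no real obstacle in this proof; the only thing to be careful about is to treat $(\gamma,\sigma)$ in Model~\ref{model:TNEP_ptdf_subproblem} as right-hand-side parameters (via the fixing constraints \eqref{eq:TNEP:fixed_line}--\eqref{eq:TNEP:fixed_stor}) rather than as decision variables, so that the feasibility-inclusion argument is clean. The conclusion that $q_s(\cdot)$ is monotone nonincreasing in each argument then follows immediately by applying the inequality coordinatewise.
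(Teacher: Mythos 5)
Your proposal is correct and follows essentially the same argument as the paper: identify that $(\gamma,\sigma)$ enter only through the thermal limits \eqref{eq:TNEP:thermal_limit} and the storage bounds \eqref{eq:TNEP:soc_energy_rating}--\eqref{eq:TNEP:discharge_limits}, conclude feasible-set inclusion, and invoke that minimizing an unchanged objective over a larger set cannot increase the optimum. Your extra care in treating $(\gamma,\sigma)$ as right-hand-side parameters via \eqref{eq:TNEP:fixed_line}--\eqref{eq:TNEP:fixed_stor} is a slightly more explicit rendering of the same reasoning the paper uses.
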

\begin{proof}
    For fixed $s \in \S$, increasing $\gamma$ relaxes the thermal limit constraints \eqref{eq:TNEP:thermal_limit}, and increasing $\sigma$ relaxes the storage constraints \eqref{eq:TNEP:soc_energy_rating}-\eqref{eq:TNEP:discharge_limits}. Hence the feasible region under $(\gamma', \sigma')$ contains that under $(\gamma, \sigma)$. Because the SP objective is minimized and depends on $\gamma,\sigma$ only through these constraints, enlarging the feasible set cannot increase the optimal value, yielding the claim.
\end{proof}

\subsubsection{Combined Bound}
Define $z^*$ as the optimal final solution's objective, with optimal line and storage investments $(\gamma^*, \sigma^*) \in \Sigma_{\mathrm{feas}}$.
\begin{align*}
    z^* := f(\gamma^*, \sigma^*) + \textstyle\sum_{s\in\S}\Omega_{\S}(s)\, q_s(\gamma^*,\sigma^*)
\end{align*}
Since Theorem \ref{thm:first_stage_bound} and \ref{thm:opex_monotonicity} show, respectively, that
\begin{align*}
    f(\gamma^{\hat s}_{\mathrm{feas}},\sigma^{\hat s}_{\mathrm{feas}}) & \leq  f(\gamma^*, \sigma^*),\\
    q_s(\gamma^{\max},\sigma^{\max}) & \leq q_s(\gamma^*,\sigma^*) \quad \forall \; s \in \S,
\end{align*}
a valid combined lower bound can be constructed:
\begin{align*}
    \text{LB}^* := f(\gamma^{\hat s}_{\mathrm{feas}},\sigma^{\hat s}_{\mathrm{feas}}) + \textstyle\sum_{s\in\S}\Omega_{\S}(s)\, q_s(\gamma^{\max},\sigma^{\max}) \leq z^*
\end{align*}

\subsubsection{Optimality Gap}
Each $c^{\text{feas}}_s$ is computed under the integer relaxation of the first-stage investments (as in the WS). Being a relaxation, this provides a valid lower bound for the year-wide plan, which must satisfy discretized first-stage decisions. The reported optimality gaps are therefore conservative upper bounds: $(\mathrm{UB}_k - \mathrm{LB}^*)/\mathrm{UB}_k$.

\section{Case study}
\label{sec:casestudy}
The proposed methodology is tested on a large-scale power system case study: the ACTIVSg2000 test case, a 2000-bus model of the U.S. Texas grid. The TEP+Storage model is applied in a rolling-horizon manner. The formulation itself has year-agnostic investment variables, but is solved sequentially at 5-year decision points $\mathcal{Y} = \{2030, 2035, 2040, 2045\}$ with investment periods of length $\alpha=1825$ days. Investments (and their capital costs) are cumulative: capacity installed in an earlier year remains available in subsequent years, ensuring monotone, non-decreasing expansion across planning periods. This application reflects practical staging, but it is not guaranteed optimal relative to an omniscient, all-years model; quantifying any cross-period optimality gap is left for future work.

\paragraph{Network Data}
\label{sec:casestudy:data}
The synthetic Texas system is built from public information and statistical analysis of real power systems \cite{synthetic_1}. The system has 2000 buses, 3206 lines, 423 non-renewable generators, and 175 renewable generators. Figure \ref{fig:generators} shows the location of solar, wind, and nonrenewable generators, overlaid on the grid topology. 
\begin{figure}[!t]
\centering
\includegraphics[width=0.75\columnwidth, trim={0 9 0 9}, clip]{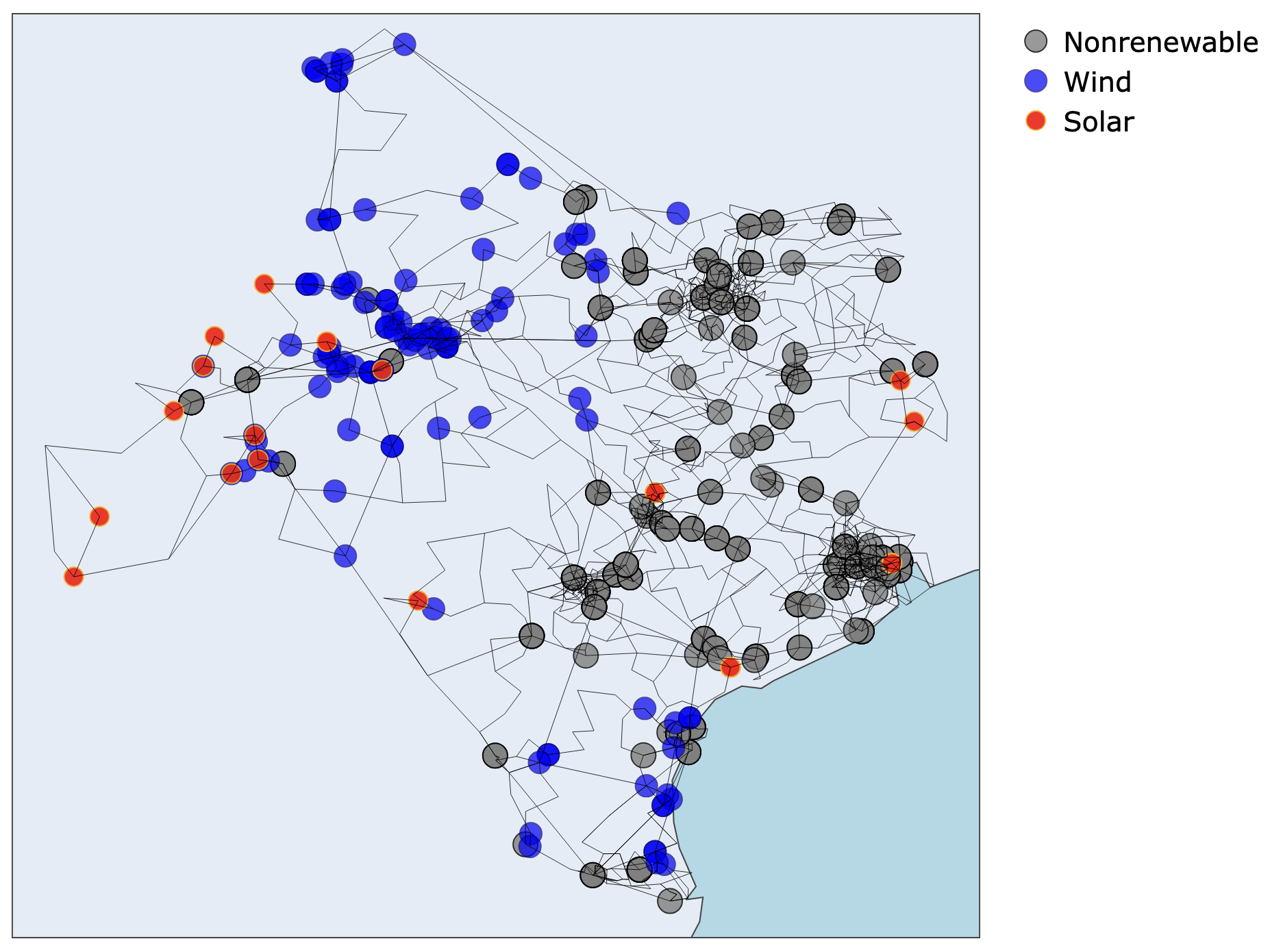}
\caption{Generator sites of the system. \textcolor{black}{\textbf{Black}} circles indicate nonrenewable, \textcolor{blue}{\textbf{blue}} circles indicate wind, and \textcolor{red}{\textbf{red}} circles indicate solar generators.}
\label{fig:generators}
\end{figure}
The future generation mix from 2030 to 2045 follows EIA outlook projections \cite{EIA2023Outlook}. All generation resources are uniformly scaled to reflect projected growth or planned decommissioning. Load growth is modeled with a year-to-year increase of 1.5\% and is scaled uniformly across all buses. Table \ref{tab:gen_load_scale} shows the total generator capacities (GW) of the network and summarizes the scaling for each generation type and load relative to the base year, 2022. 

\begin{table}[!t]
        \centering
        \caption{Generation Base Capacity (GW) and Generation/Load Scaling}
        \label{tab:gen_load_scale}
        \begin{tabular}{lcccccc}
            \toprule
            \textbf{Type} & \textbf{Capacity} & \textbf{2022} & \textbf{2030} & \textbf{2035} & \textbf{2040} & \textbf{2045} \\
            \midrule
            Coal & 14.5 & 1.00 & 0.82 & 0.82 & 0.82 & 0.82 \\
            Natural Gas & 75.1 & 1.00 & 0.79 & 0.73 & 0.71 & 0.72 \\
            Nuclear & 7.1 & 1.00 & 0.98 & 0.90 & 0.80 & 0.80 \\
            Solar & 2.5 & 1.00 & 4.51 & 6.00 & 6.87 & 8.04 \\
            Wind & 15.5 & 1.00 & 2.02 & 2.23 & 2.26 & 2.32 \\
            Load & - & 1.00 & 1.13 & 1.21 & 1.31 & 1.41 \\
            \bottomrule
        \end{tabular}
    \end{table}

\paragraph{Scenario selection}
Scenarios are 24-h representative days sampled from a year of load/wind/solar data. This reduces the computational time required for evaluating a full year of operations, while still capturing intertemporal storage dynamics. In this case study, $k_r = 18$ days are selected by a max-min diversity criterion on normalized daily profiles--maximizing the minimum pairwise Euclidean distance--and assigned equal weights. The selected days capture both high-demand and high-renewable operating regimes that drive investment decisions. Days with higher demand levels typically correspond to days where storage operation is needed to avoid load shedding. Among these, August 11 (08-11) represents the most challenging operating condition as it exhibits the highest average demand. This representative day drives significant investment requirements, as discussed in §\ref{sec:results}.

\paragraph{Investments parameters}

Line upgrade costs are set to \$1243/MW-km, adapted from \cite{Esmaili}. Storage investment cost is set to \$1{,}000{,}000/MWh, with a charge/discharge cost of \$68.50/MWh, adapted from \cite{NREL_ATB_UtilityScaleBattery_2024}. Line upgrades are discretized to capacity increases of 30, 60, and 90\%, reflecting the enhancements achievable through typical line reconductoring \cite{GridLab2024Reconductoring}. Storage units are 250 MWh each, limited to 12 per node (3 GWh total), with charge and discharge efficiencies of $\eta = 95\%$.

\paragraph{Computational Setup and Hyperparameters}
\label{sec:casestudy:computational_specs}
The TEP+Storage model is implemented in Julia with JuMP \cite{Lubin2023} and solved with Gurobi \cite{gurobi} on Georgia Tech’s PACE Phoenix cluster \cite{PACE} (24-core Intel Xeon, Linux); each run used 24 threads and 288 GB RAM. TR-Benders employed a 0.1\% MP optimality-gap tolerance, with wall-clock limits of 24 hours for 2030/2035/2040 and 72 hours for 2045 to accommodate longer per-iteration solves. The per-iteration added-violations quota was set to $k_v=32$ and the PTDF cutoff to $k_p = 0.005$, reducing nonzeros in $\Phi$ by approximately 75\%. Lower PTDF cutoff values (e.g., $k_p=0.001$ or no cutoff) were also tested and substantially increased PTDF density and subproblem solve times, particularly in later planning periods, while having negligible impact on the resulting investment decisions.
\section{Results}
\label{sec:results}

With the proposed methodology, the year-wise TEP+Storage planning problem is both tractable and achieves optimality guarantees of $\leq 2 \%$. Results are organized as follows: optimality gaps (§\ref{sec:results:opt_gaps}), computational performance (§\ref{sec:results:computational}), and investment outcomes (§\ref{sec:results:investments}).

\subsection{Optimality Gaps}
\label{sec:results:opt_gaps}
All optimality gaps are computed relative to the lower bound defined in §\ref{subsec:lower_bound}, which bounds the best achievable objective for each 5-year decision given previously committed investments. Final gap values are reported in the rightmost column of Table \ref{tab:investments_opt_gap}. Across all investment periods, the WS initializes near optimality, yielding initial gaps within 11\%. The subsequent TR-Benders reduces these gaps to below 2\% in every period, validating the approach and the investment solutions presented below.

\begin{table*}[!t]
    \caption{Investment decisions and costs for 2030-2045 for most challenging rep. day, WS, and TR-Benders plans.}
    \centering
    \label{tab:investments_opt_gap}
    \begin{tabular}{clccccccccccc}
        \toprule
         & \multirow{2}{*}{Mode} & \multirow{2}{*}{\# Lines} & \multirow{2}{*}{\begin{tabular}[c]{@{}c@{}}\# Storage units / \\ Total capacity (GWh)\end{tabular}} & \multicolumn{2}{c}{$CapEx$} & \multicolumn{2}{c}{$OpEx$} & \multirow{2}{*}{Opt. Gap} \\
         & & & & Lines (\$B) & Storage (\$B) & Gen (\$B) & StorOpEx (\$B) & & \\
        \midrule 
         \parbox[t]{2mm}{\multirow{3}{*}{\rotatebox[origin=c]{90}{2030}}} 
        & 08-11 & 183 & 0 / 0 & 0.26 & 0 & 46.49 & 0 & - \\
        & Warm-start & 291 & 0 / 0 & 1.07 & 0 & 28.80 & 0 & 3.6\% \\
        & \textbf{TR-Benders} & 191 & 0 / 0 & 0.35 & 0 & 29.05 & 0 & \textbf{2.0\%} \\
        \midrule 
         \parbox[t]{2mm}{\multirow{3}{*}{\rotatebox[origin=c]{90}{2035}}} 
        & 08-11 & 315 & 34 / 28.91 & 0.30 & 28.93 & 50.92 & 7.27 & - \\
        & Warm-start & 375 & 34 / 33.75 & 1.55 & 33.75 & 30.40 & 0.40 & 8.9\% \\
        & \textbf{TR-Benders} & 349 & 30 / 29.50 & 1.06 & 29.50 & 30.47 & 0.41 & \textbf{2.0\%} \\
        \midrule 
         \parbox[t]{2mm}{\multirow{3}{*}{\rotatebox[origin=c]{90}{2040}}} 
        & 08-11 & 419 & 116 / 98.10 & 1.29 & 98.16 & 56.59 & 24.71 & -\\
        & Warm-start & 457 & 116 / 112.75 & 2.48 & 112.75 & 33.35 & 1.64 & 10.6\% \\
        & \textbf{TR-Benders} & 443 & 98 / 98.75 & 1.40 & 98.75 & 33.45 & 1.65 & \textbf{0.7\%} \\
        \midrule 
         \parbox[t]{2mm}{\multirow{3}{*}{\rotatebox[origin=c]{90}{2045}}} 
        & 08-11 & 479 & 175 / 167.31 & 1.58 & 167.31 & 62.66 & 42.02 & -\\
        & Warmstart & 509 & 175 / 183.50 & 2.64 & 183.50 & 36.63 & 3.66 & 7.7\% \\
        & \textbf{TR-Benders} & 491 & 167 / 167.75 & 1.59 & 167.75 & 36.72 & 3.70 & \textbf{0.4\%} \\
        \bottomrule
    \end{tabular}
\end{table*}

Figure \ref{fig:optimality_over_time} plots the evolution of the optimality gap for the 2040 planning period. Only iterations that improve the upper bound, corresponding to zero load-shedding solutions, are displayed. Similar monotonic reductions in the gap are observed across all planning periods, illustrating the effectiveness of the trust-region stabilization in promoting consistent convergence behavior.

\begin{figure}[!h]
\centering
\includegraphics[width=0.75\columnwidth, trim={0 0 0 7}, clip]{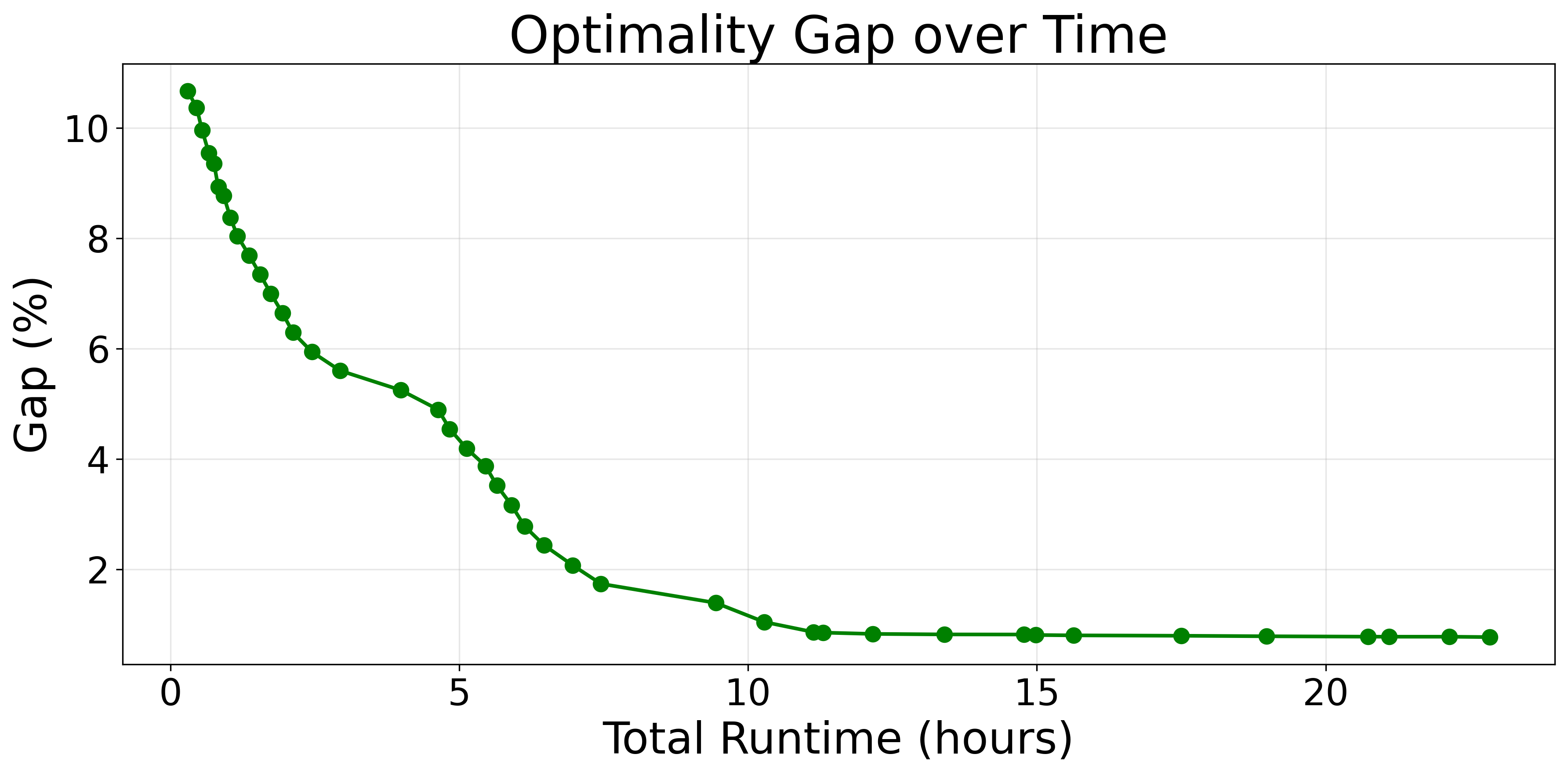}
\caption{Optimality gap over time of TR-Benders for period 2040}
\label{fig:optimality_over_time}
\end{figure}

\subsection{Computational Results}
\label{sec:results:computational}
Under the scalable PTDF reformulation (§\ref{sec:formulation:TNEP_PTDF}), the TEP+Storage model remains computationally tractable at 2,000-bus scale. A tunable PTDF cutoff $k_p$ provides planners with a direct accuracy-scalability trade-off (less straightforward in phase-angle DC formulations). Across all representative days, less than 2.2\% of PTDF flow constraints are activated under the lazy constraint procedure during the WS, and less than 3.2\% during the TR-Benders, representing a substantial reduction in subproblem size.

\subsubsection{WS Module}
 Figure \ref{fig:two_stage_boxplot} summarizes computation times for the relaxed two-stage model in the warm-start module, reported per representative day and 5-year investment period. Runtime is dominated by the iterative discovery of violated thermal-limit constraints; consequently, more congested days take longer to solve. Average time-to-solve falls from 2030 to 2040 because investments made in earlier 5-year periods remain in place in subsequent periods, relieving congestion and simplifying decisions for most representative days. Outliers persist on days that combine heavy storage utilization with elevated congestion (e.g., 08-11 with its high average demand).

\begin{figure}[!t]
\centering
\includegraphics[width=0.75\columnwidth, trim={0 0 0 7}, clip]{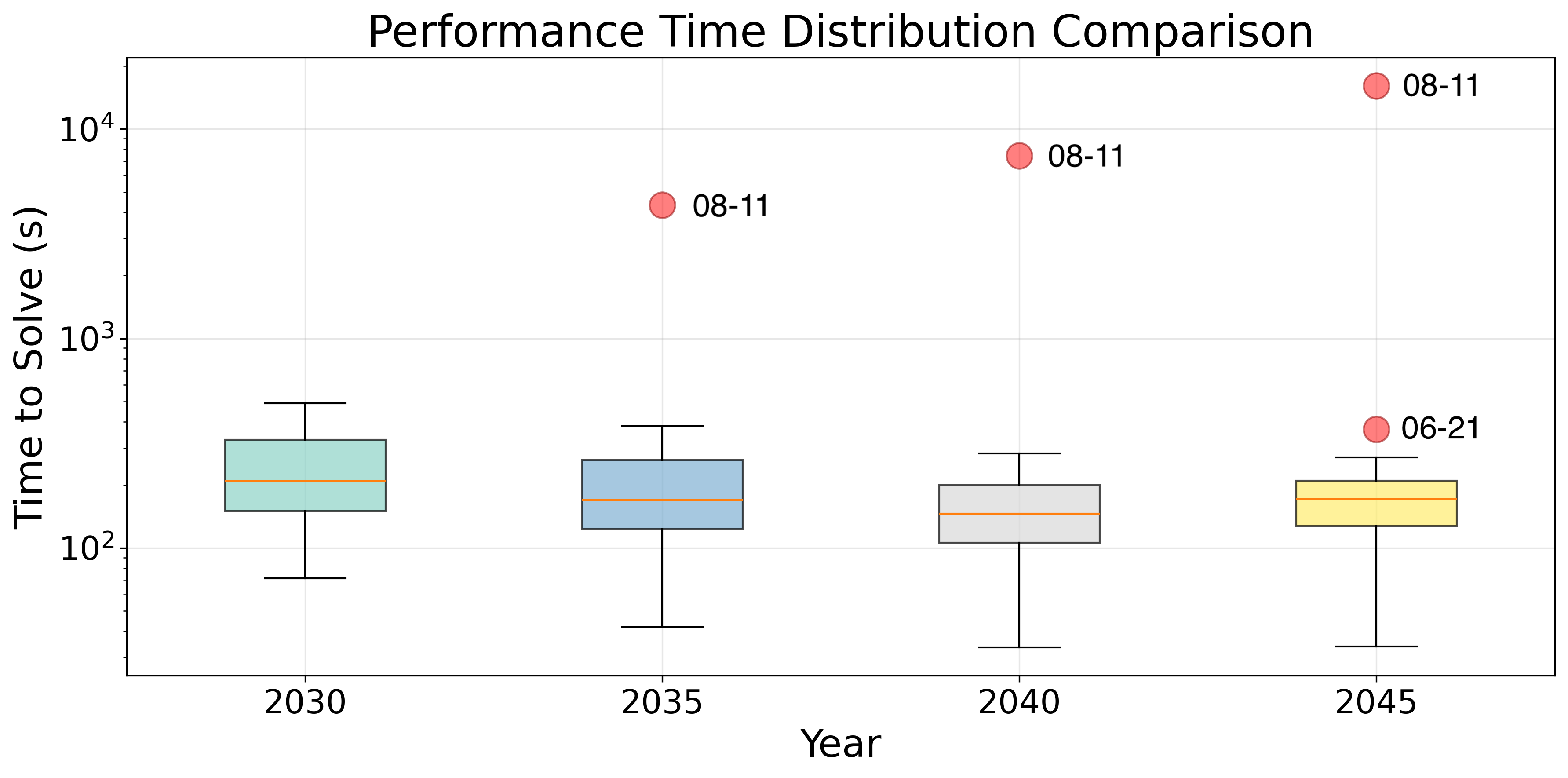}
\caption{WS computation times by representative day for each 5-year investment period.}
\label{fig:two_stage_boxplot}
\end{figure}

\subsubsection{TR-Benders Module}
The TR-Benders demonstrates improved tractability by decoupling first-stage investment decisions from per-scenario operations. Table \ref{tab:benders_time_iter} reports time limits, iteration counts, and average seconds per iteration for each 5-year investment period. Iteration time increases over the planning horizon because progress each round is paced by the most challenging SP/representative day: the algorithm advances only after all cuts are generated. 

To further illustrate this effect, the final TR-Benders investment configuration was fixed and the operational SP was solved for each representative day, as shown in Figure \ref{fig:storage_runtime}. The results show that computational effort increases with storage energy throughput, defined as the total GWh charged and discharged during the representative day, across representative days and planning periods. Higher storage activity increases the number of feasible charging patterns interacting with network constraints, which prolongs SP solves due to storage time-coupling and the ongoing discovery of violated thermal-limit constraints. Further computational improvements may be attained by pre-screening thermal violations.

Overall, the computational performance and low optimality gaps underscores the tractability of the proposed formulation at high spatial and storage resolution.

\begin{figure}[!t]
\centering
\includegraphics[width=0.75\columnwidth, trim={0 0 0 7}, clip]{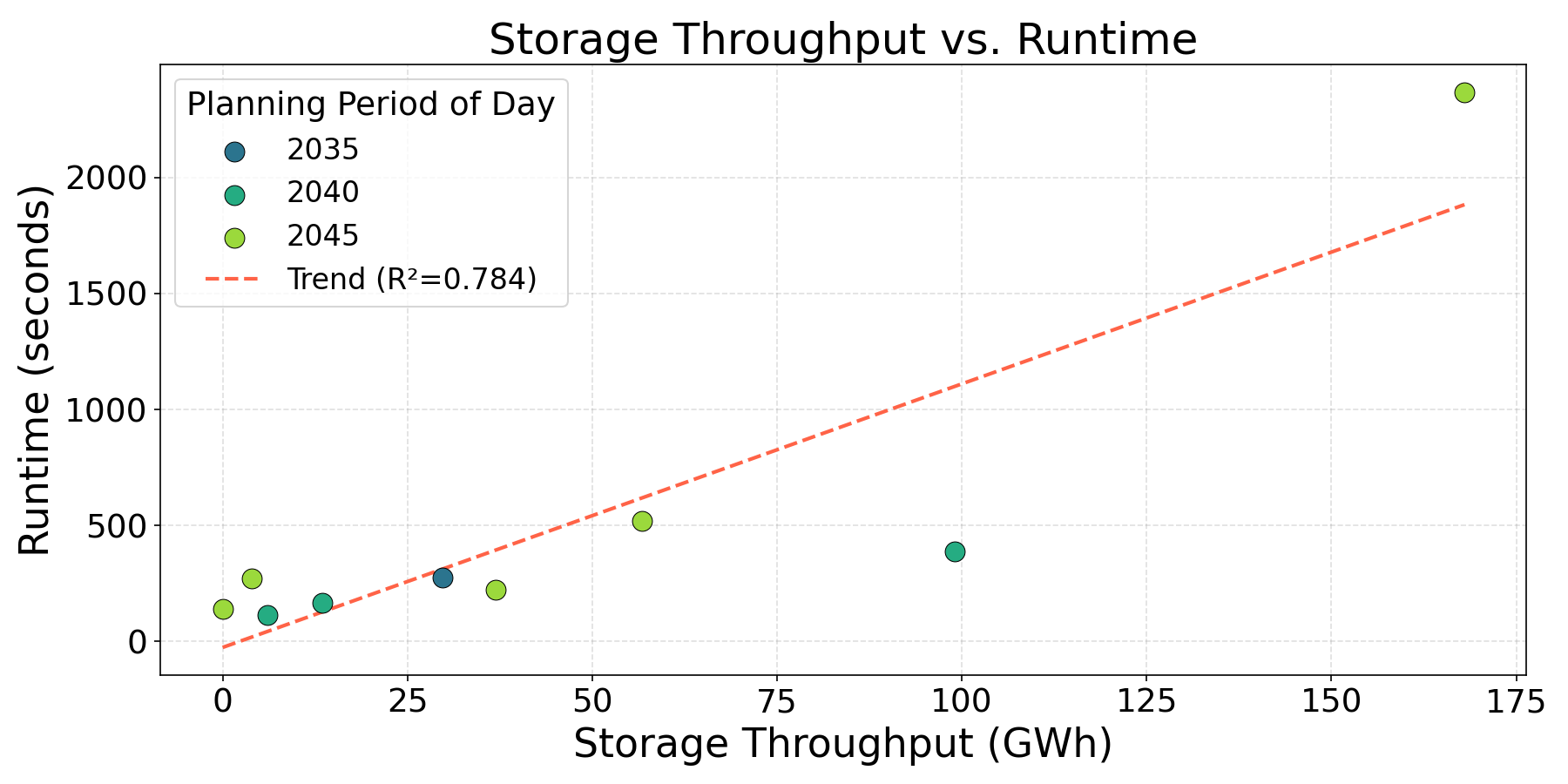}
\caption{SP runtime vs. storage energy throughput (GWh) for representative days with nonzero utilization across the 2030-2045 planning horizon. Values reflect the representative day under the investment plan in effect for that year.}
\label{fig:storage_runtime}
\end{figure}

\begin{table}[!t]
    \centering
    \caption{TR-Benders computational performance for each 5-year investment period.}
    \label{tab:benders_time_iter}
    \begin{tabular}{lcccc}
        \toprule
         & \textbf{2030} & \textbf{2035} & \textbf{2040} & \textbf{2045} \\
        \midrule
        Time Limit (Hrs.) & 24 & 24 & 24 & 72 \\
        Iterations & 1186 & 518 & 190 & 49 \\
        Avg. Seconds/Iter & 73 & 167 & 455 & 5290 \\
        \bottomrule
    \end{tabular}
\end{table}

\begin{figure*}[!t]
\centering
\subfloat[2030 TEP+Storage]{\includegraphics[width=0.45\columnwidth, trim={375 85 380 65}, clip]{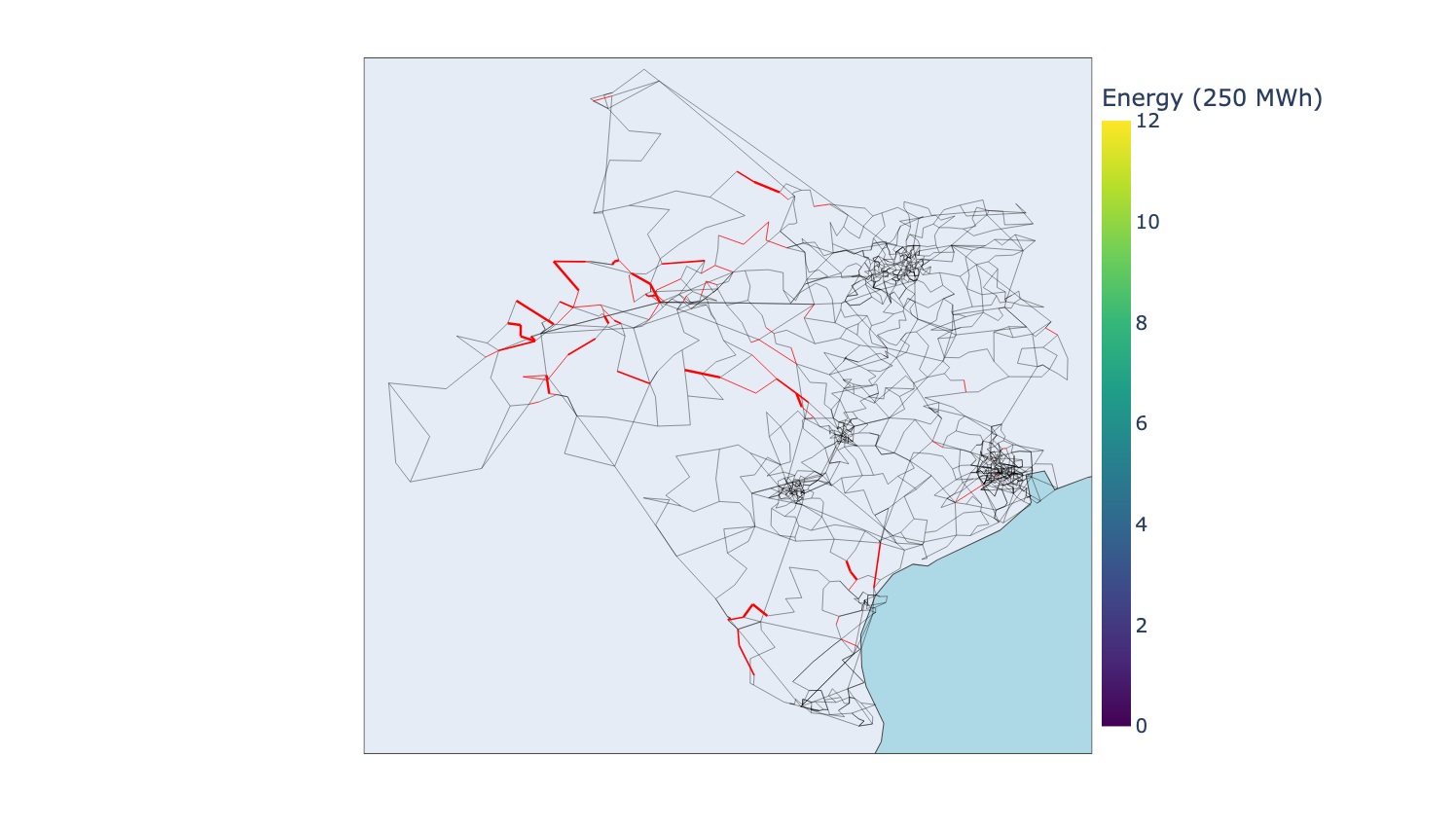}}
\hspace{0.1cm}
\subfloat[2035 TEP+Storage]{\includegraphics[width=0.45\columnwidth, trim={375 85 380 65}, clip]{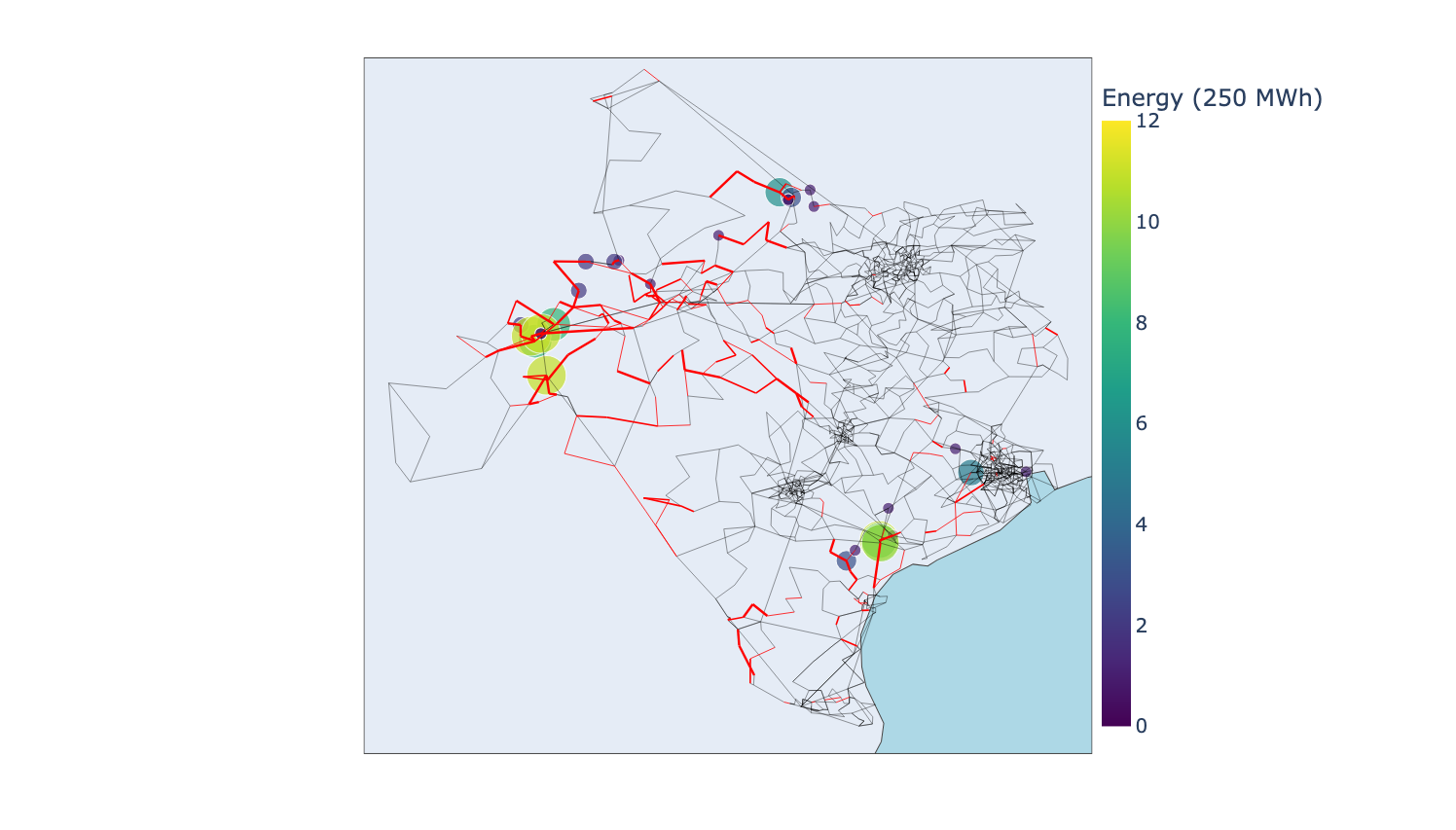}}
\hspace{0.1cm}
\subfloat[2040 TEP+Storage]{\includegraphics[width=0.45\columnwidth, trim={375 85 380 65}, clip]{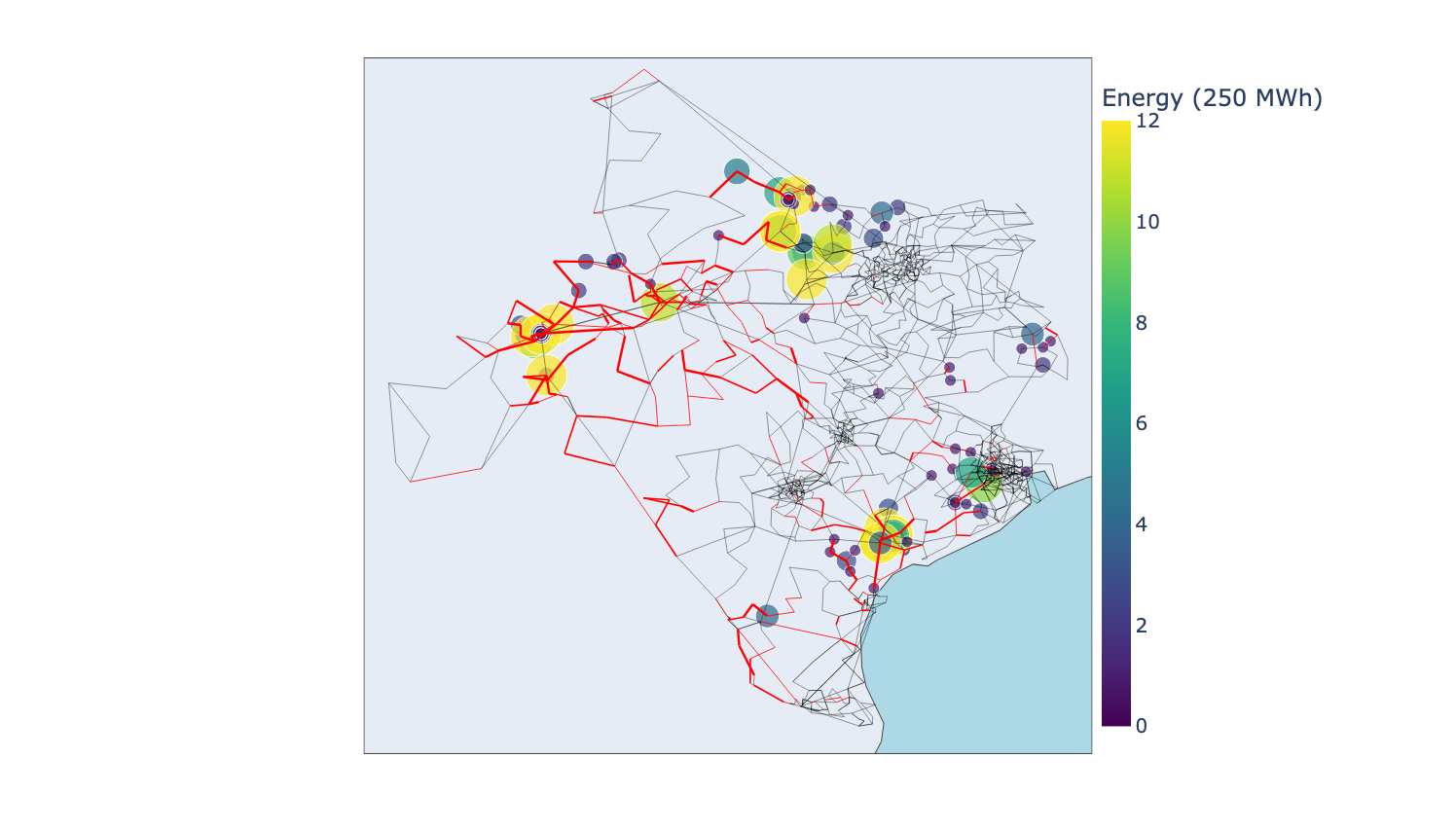}}
\hspace{0.1cm}
\subfloat[2045 TEP+Storage]{\includegraphics[width=0.593\columnwidth, trim={375 85 140 65}, clip]{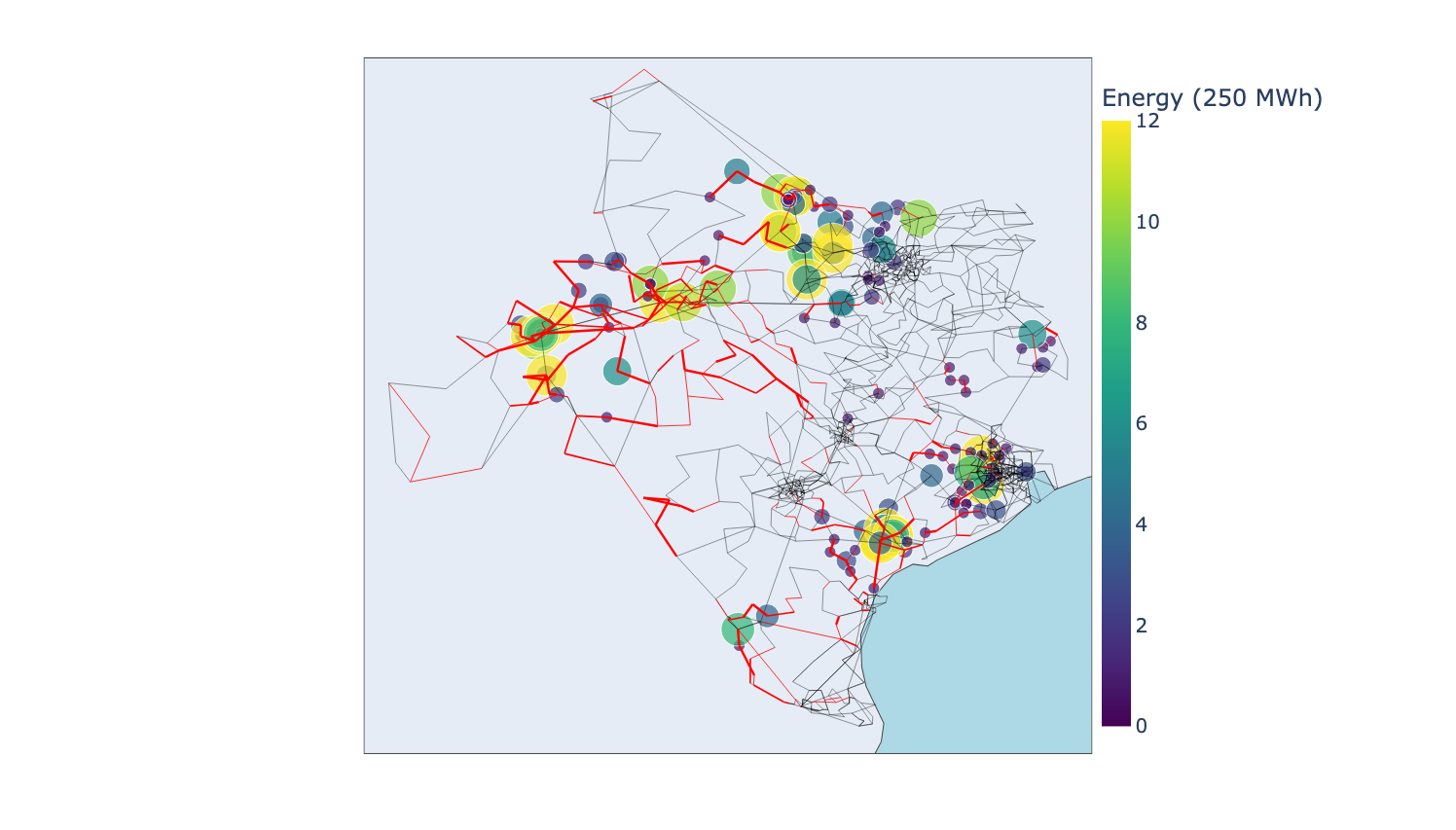}}
\caption{Upgrades from 2030 to 2045 in 5-year intervals. \textcolor{red}{\textbf{Red}} lines denote capacity upgrade locations, with their thickness corresponding to one of three possible upgrade levels. Circles indicate storage locations, with their size and color representing the number of installed units (250MWh each).}
\label{fig:invs}
\end{figure*}

\begin{figure}[!t]
\centering
\includegraphics[width=0.75\columnwidth, trim={0 7 0 7}, clip]{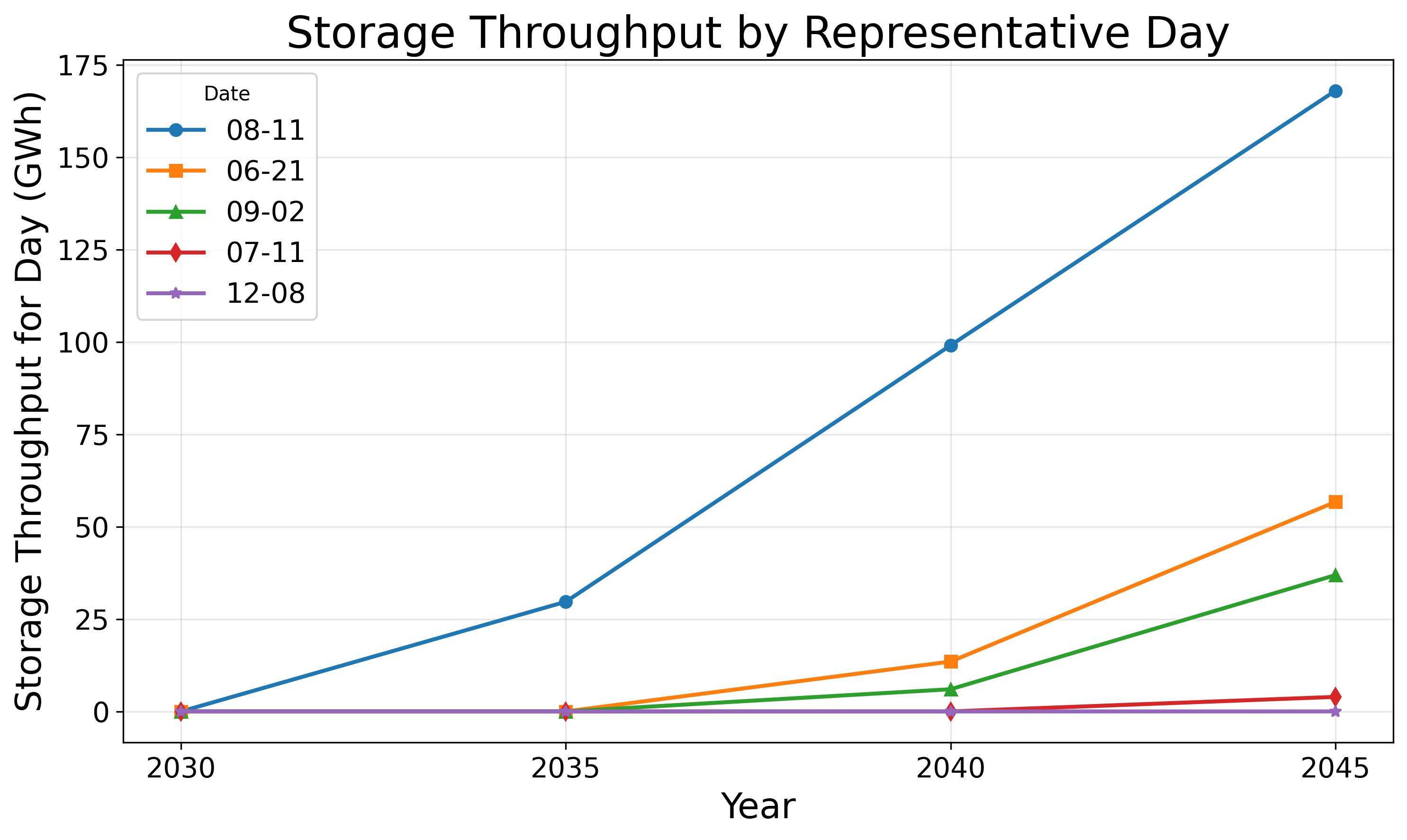}
\caption{Storage energy throughput (GWh) for representative days with nonzero utilization across the 2030–2045 planning horizon. Values reflect the representative day under the investment plan in effect for that year.}
\label{fig:storage_utilization}
\end{figure}

\subsection{Investment Results}
\label{sec:results:investments}
The TR-Benders produces a feasible, near-optimal year-wide plan for 2030–2045 with zero load shed across all representative days. Figure \ref{fig:invs} maps the geographic distribution of investments. Transmission upgrades concentrate in West Texas, strengthening transfer from wind/solar regions, while storage appears beginning in 2035 and grows considerably with increasing load. Figure \ref{fig:storage_utilization} reports storage energy throughputs for each planning year. Only a small fraction of days require storage operation to eliminate load shedding; 08-11 exhibits the largest throughput and aligns with its computational difficulty (Fig. \ref{fig:two_stage_boxplot}). Consequently, multiyear investment choices are largely driven by this day.

Table \ref{tab:investments_opt_gap} summarizes the selected upgrades, storage siting/sizing, and costs; its rightmost column displays the final optimality gaps referenced earlier. The WS results are included to illustrate the quality of the WS procedure prior to applying the TR-Benders algorithm. Additionally, single-day WS results are reported for the most challenging day (08-11) to provide a reference point for the TR-Benders solution and to support the observation above that this day largely drives the resulting investment decisions.

The WS intentionally over-invests. After the TR-Benders, the near-optimal plan reduces $CapEx$ relative to the WS, while incurring only negligible increases in $OpEx$, yielding a lower total cost. The 08-11 single-day results show comparatively higher $OpEx$ than both the WS and TR-Benders plans because they optimize solely for the highest-demand day rather than the equally weighted set of representative days. Because storage dominates investment costs, most of the residual gap is driven by reconciling storage capacity rather than line upgrades. The storage capacity of TR-Benders approaches--but remains above--the requirement implied by the hardest day, consistent with Theorem \ref{thm:first_stage_bound} ensuring zero load-shed feasibility. By 2045, the optimized plan yields 167.75 GWh of storage distributed across 167 nodes, which is equivalent to $32\%$ of the scaled peak renewable capacity. 
\section{Conclusion}
\label{sec:conclusion}
This paper presents a PTDF-based TEP+Storage for joint transmission and storage expansion and applies it to a 2,000-bus synthetic Texas system. To ensure computational feasibility, it introduces a trust-region, multicut Benders scheme warm-started from per-representative-day optima. The method attains final optimality gaps $\leq 2\%$ and remains tractable for a large distributed storage fleet deployed at $\geq 167$ nodes ($\approx 32\%$ of peak renewable capacity). These results demonstrate the proposed PTDF-based methodology effectively handles high spatial resolution and large distributed storage deployments in joint TEP+Storage planning.

Table~\ref{tab:benders_time_iter} shows that computational challenges persist, particularly under high storage penetration and network congestion, and will intensify as temporal, operational, and spatial fidelity increase. Priorities include accelerating SP solves--e.g., pre-screening and dynamic pruning of thermal/flow constraints--and exploring unified Benders variants that combine feasibility and optimality cuts. Another direction is to integrate generation expansion, which may relieve congestion and further improve lazy-constraint performance. Finally, certifying cross-period optimality, testing multistage formulations, and quantifying the value of deferred investments are natural extensions.
\bibliographystyle{IEEEtran}
\bibliography{references.bib}

% that's all folks
\end{document}